\newcommand{\keywords}[1]{\par\addvspace\baselineskip
	\noindent\keywordname\enspace\ignorespaces#1}
\begin{document}
	
	\mainmatter  
	
	\title{Competitive Analysis of Move-to-Front-or-Middle (MFM)\\ Online List Update Algorithm}
	
	\titlerunning{Lecture Notes in Computer Science: Authors' Instructions}
	
	%
	%
	\author{Baisakh, Rakesh Mohanty}
	%
	
	\institute{Veer Surendra Sai University of Technology,\\Department of Computer Science and Engineering,\\
		Burla, Sambalpur,Odisha, 768019}
	
	%
	%
	
	\maketitle

\begin{abstract}

The design and analysis of efficient algorithms with the knowledge of current and past inputs is a non trivial and challenging research area in computer science. In many practical applications the future inputs are not available to the algorithm at any instance of time. So the algorithm has to make decisions based on a sequence of inputs that are in order and on the fly. Such algorithms are known as online algorithms. For measuring the performance of online algorithms, a standard measure, known as competitive analysis, has been extensively used in the literature. List update problem is a well studied research  problem in the area of online algorithms since last few decades. One of the widely used deterministic online list update algorithm is the \emph{Move-To-Front (MTF)} algorithm, which has been shown to be $2-competitive$ with best performance in practical real life inputs. In this paper we analyse the \emph{Move-to-Front-or-Middle (MFM)} algorithm using competitive analysis by addressing one of an open question raised by \emph{Albers} that whether dynamic offline algorithm can be used in finding the competitiveness of an online algorithm? \emph{Move-To-Front-or-Middle (MFM)} was experimentally studied and observed to be performing better than \emph{MTF} algorithm by using the Calgary Corpus and Canterbury Corpus data set. However, it is interesting and challenging to find the lower bound and upper bound on the competitive ratio of \emph{MFM} algorithm. We make a first attempt to find the competitiveness of \emph{MFM} algorithm. Our new results show that \emph{MFM} is not \emph{2-competitive} with respect to static optimum offline algorithm, whereas it is \emph{2-competitive} with respect to dynamic optimum offline algorithm. Our new theoretical results may open up a new direction of research in the online list update problem by characterising the structure of competitive and non competitive deterministic online algorithms.

\keywords{Online Algorithm, Competitive Analysis, Optimum Offline Algorithm List Update Problem, Move-To-Front (MTF) Algorithm, Move-To-Front-or-Middle (MFM) Algorithm }
	\end{abstract}
	
\section{Introduction}
\subsection{List Update Problem} 
	The \emph{List Update Problem (LUP)} introduced by \emph{McCabe} \cite{McCabe} has been a well studied classical optimization problem in the literature \cite{Bentley}, \cite{Rivest} \cite{Binter} since last five decades. In this problem we have two inputs such as an unsorted linear list \emph{L} and a sequence of requests $\sigma$. The list \emph{L} contains a set of distinct \emph{l} items i.e $<x_{1} , x_{2} , ..., x_{l}>$. These set of items need to be maintained by the list update algorithm while serving a sequence of requests $\sigma=$ $ <\sigma_{1}, \sigma_{2}, ..., \sigma_{n} >$. Every single request $\sigma_{i}$ refers to an access operation to the requested item present in the list where  $\sigma_{i} \in L$. To access an item, the list update algorithm begins the search operation linearly from the front of the list. It keeps searching for the requested item towards the end of the list till the item is found. Every time an item is served, the list configuration can be changed by performing the list reorganization through exchange or reordering of items of  the list. A configuration of the list represents any permutation of $l$ items of the list. The list can be reorganized by considering two different types of exchange operations such as  free exchanges and the paid exchanges. The total access and reorganization cost of serving a request sequence on the list is determined by using a cost model. The standard cost model defined by \emph{Sleator and Tarjon} \cite{Sleator} has been extensively used in the literature for the list update problem.
	
	In the standard cost model, accessing an item $x_{i}$ at the $i^{th}$ position of the list incurs a cost of \emph{i}. Once an item is accessed it can be moved to any position forward of the list with no extra cost which is known as a free exchange. But if any two consecutive items are exchanged with respect to their relative positions, then it incurs a cost of \emph{1}, and is known as a paid exchange. The objective of the list update algorithm is to minimize the total cost denoted by $C(L, \sigma)$ which is the sum of access cost and reorganization cost while serving a request sequence $\sigma$.
	
\subsection{Online Algorithms and Competitive Analysis} In an online scenario, a list update algorithm serves every request without having the knowledge of future requests. Such an algorithm is known the online list update algorithm and is denoted by $ALG$. At every time instance $t_{i}$ and input $\sigma_{i}$ is known to the algorithm where as the rest of requested items i.e $< \sigma_{i+1}, \sigma_{i+2} , ..., \sigma_{n}>$ are unknown. 
	
	Though many deterministic online list update algorithms have been proposed in the literature \cite{Bachrach}, most of the algorithms are the variants of three well known primitive algorithms. They are \emph{Move-To-Front (MTF)}, \emph{Transpostion (TRANS)} and \emph{Frequency Count (FC)} as follows.\\
	\emph{\bf{Move-To-Front (MTF):}} When an item is accessed, then it is moved immediately to the front of the list with only free exchanges.\\  
	\emph{\bf{Transpose (TRANS):}} An immediately accessed item is moved to one position forward in the list by transposing with its immediately preceding item.\\  
	\emph{ \bf{Frequency Count (FC):}} When an item is accessed, the count value is incremented to one and then sort the items in the list according to the decreasing order of their frequency.\\  
	Among these three algorithms, \emph{MTF} is most widely used in practical applications and well studied in the literature.
	
	Competitive analysis \cite{Manasse} is one of the standard performance measures for the online algorithms \cite{Borodin1} \cite{Fiat}. Competitive analysis is performed by comparing the worst case cost incurred by an online algorithm \emph{ALG} over all finite request sequences, with the cost incurred by the optimum offline algorithm \emph{OPT}. Optimum offline algorithm gives the minimum cost while serving any given request sequence $\sigma$. Let the cost incurred by $ALG$ and $OPT$ be denoted by $C_{ALG}(L,\sigma)$ and $C_{OPT}(L,\sigma)$ respectively. Then an online algorithm \emph{ALG} is called \emph{d-competitive} if there exist a constant $\beta$ such that $C_{ALG}(L,\sigma)$ $\leq$ $d.C_{OPT}(L,\sigma)$ $+\beta$ for all request sequences of $\sigma$. The less the competitive ratio the better the performance of an online algorithm.
	
\subsection{Practical and Research Motivations:} One of the variant of \emph{Move-To-Front (MTF)}, known as \emph{Move-to-Front-or-Middle(MFM)} was proposed  and experimentally studied in \cite{Mohanty et. al}. The nature of both the algorithms are almost similar to each other. \emph{MFM} differs from \emph{MTF} only when the position of the immediately accessed item is greater than the middle position of the list. In that case the \emph{MFM}, moves the accessed item to the middle position, instead of moving to the front of the list.
	
	The experimental study of \emph{MFM} was conducted on two well known benchmark data sets, \emph{Calgary Corpus} and \emph{Canterbury Corpus}. In experimental analysis, they compared the performance of the proposed \emph{MFM} and the \emph{MTF} in terms of total gain. The results showed that \emph{MFM} outperformed \emph{MTF}, though it has not strikingly different performance from the \emph{MTF} on the tested benchmarks. According to our knowledge, till date no theoretical analysis has been performed to study the competitive nature of the \emph{MFM} online algorithm. In this paper, we begin with the motivation to analyse whether \emph{MFM} is a \emph{2-competitive} algorithm or not? So the primary objective of our study is to investigate the underlying gap that exists between the experimental and  theoretical results of \emph{MFM}.
	
	Our theoretical study of \emph{MFM}, makes an attempt to address one an open question posed by \emph{Albers} in \cite{Albers}, whether we can develop competitive online algorithm with respect to dynamic offline algorithm or not? Because in general, the competitive analysis of an online algorithm is performed against the static offline algorithms. The static offline algorithms perform the list rearrangement only once at the beginning, prior to serve the given request sequence and the list remains unchanged till all the items are served. It is  denoted as \emph{STAT}, that performs the reordering of list according to the non increasing order of the request frequencies, and then does not change the list configuration. When an online algorithm denoted by \emph{ALG} is \emph{k-competitive} against the \emph{STAT}, then it can be formally presented as $C_{ALG} \leq k.C_{STAT}+ \beta$ for all request sequence $\sigma$, where $C_{ALG}$ and $C_{STA}$ are the total cost of serving the request sequence $\sigma$. In this paper, our central focus is to investigate whether $C_{MFM} \leq k.C_{STAT}+ \beta$ or not? Can we design a dynamic optimum offline algorithm, denoted by $DYN$, where we can claim that \emph{MFM} is competitive with respect to \emph{DYN} i.e $C_{MFM} \leq k.C_{DYN}+ \beta$. More ever, one of our results depicts that the static optimum offline algorithm is always associated with an important factor that decides whether a deterministic online algorithm is competitive or not. We show that the number of elements participate in the worst case analysis of an online deterministic algorithm, influences the competitive nature of the online algorithm.
	
\subsection{Our Line of Work}
Our line of work has been motivated by the practical application of list update problem in dynamic compression model \cite{Borodin1}. In dynamic compression model, the initial list does not contain any item initially. Upon encountering a word that needs to be compressed, it is first inserted in the list and so the list is configured dynamically. Hence our theoretical study on the competitive analysis is based on this dynamic model, where we assume that the cruel request sequence being picked up by the adversary contains all items of the list. But in the literature \cite{Borodin1}, we have observed that while performing competitive analysis, the worst case cruel request sequence for different algorithms are often not typical in real life. For an example of TRANS, the cruel sequence always requests to only the last two items of the list, which gives a total worst case cost of \emph{nl}, where \emph{l} is the size of the list and \emph{n} is the size of the request sequence. In the practical application such as compression, such request sequence does not have much significance. So to avoid this pessimistic nature of competitive analysis, we have considered the cruel request sequences which are more practical while performing the competitive analysis.
	
\section{Related Work}
\subsection{Well Known Results on Deterministic List Update Algorithms}
	An empirical study of \emph{LUP} performed by \emph{Bentley} and \emph{McGeoch} in \cite{Bentley}, showed that \emph{MTF} and \emph{FC} are \emph{2-competitive} with respect to static optimum offline algorithm. Whereas \emph{TRANS} was found to be non competitive. \emph{Sleator and Tarjan} proposed a theoretical amortized analysis on these three algorithms in one of their seminal paper in \cite{Sleator}, where only \emph{MTF} is found to be \emph{2-competitive} for a dynamic list, whereas \emph{TRANS} and \emph{FC} are not competitive. 
	
	\subsection{Move-to-Front-or-Middle Algorithm (MFM)}
	\emph{MFM}\cite{Mohanty} is a variant of \emph{MTF} algorithm in which an accessed item in the list is moved either to the front or to the middle of the list based on some condition. The algorithm is presented as follows.
	
	Let \emph{L} be a list of size \emph{l} and \emph{m} be the middle position of the list, where $m= \lceil{\frac{l}{2}}\rceil$. Let \emph{x} be the currently accessed item in the list at position \emph{p}, where $1 \leq p \leq l$. If $p \leq m$, then move \emph{x} to the front, else move \emph{x} to  position \emph{m} in the list.
	
	\begin{figure}[!ht]
		\centering
		\includegraphics[width=7cm,height=7cm]{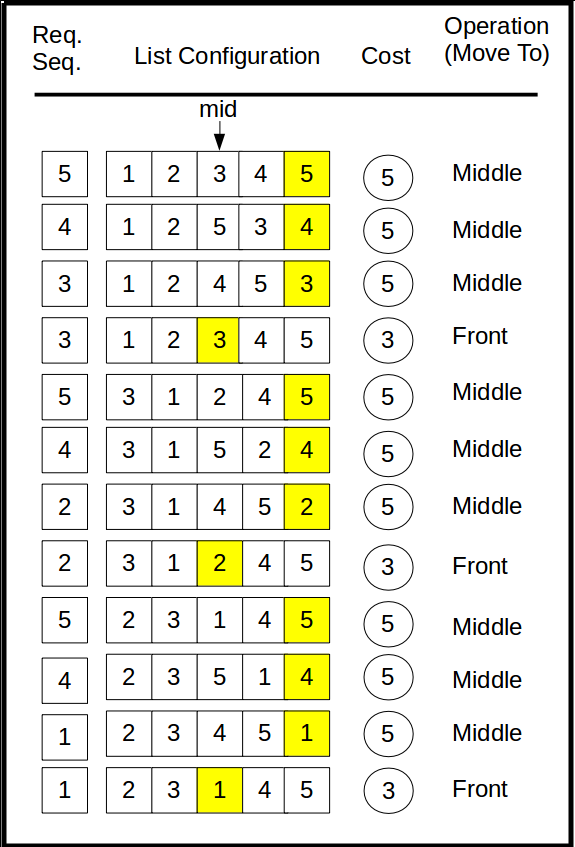}
		\caption{Illustration of MFM}
	\end{figure} 	
	We illustrate \emph{MFM} in \emph{figure-1}. Let $L=$ $<1,2,3,4,5>$ be the initial list configuration and $\sigma=$ $<5, 4, 3, 3, 5, 4, 2, 2, 5, 4, 1, 1>$ be a request sequence. Let \emph{m} be the middle position of the list \emph{L}. When a request is accessed in any position from middle to last position, an access cost is incurred based on the position of the requested item in the list and the item is moved to the middle position with free exchanges. Similarly when requested item is accessed in any of the position from $1$ to $m-1$ in the list, it will incur some access cost and the accessed item is moved to the first position of the list with free exchanges. Here the total cost incurred by the \emph{MFM} algorithm is $54$ as shown in \emph{figure-1}.  
	\section{Our Contribution: Competitive Analysis of Move-To-Front-or-Middle Algorithm }
	In this section, we provide a \emph{2-competitive} bound on the competitive ratio of the $MFM$ algorithm. In our analysis we consider two different optimal offline strategies i.e the $\emph{static opt}$ and the $\emph{dynamic opt}$.  The static optimum offline algorithm is denoted as \emph{STAT} and the dynamic optimum offline algorithm is denoted by \emph{OPT}. it is important to design an optimal offline algorithm that must be giving the tight upper bound on the total cost. For our analysis we consider two different optimal offline strategies i.e the $\emph{static opt}$ and the $\emph{dynamic opt}$. We show that how different optimal strategies can affect the lower bound of $MFM$ algorithm.
	
	\emph{Nature of Static Optimum Offline Algorithm} The static optimum offline algorithm is denoted as \emph{STAT} which initially reorganize the list before serving the request sequence. The reorganization is done on the basis of decreasing frequency of the requested items in $\sigma$ by using only the minimum paid exchanges. Once the list is reordered, it remained same till the complete requested items are served.
	
	\emph{Nature of Dynamic Optimum Offline Algorithm} In contrast to the $\emph{STAT}$, dynamic optimum offline algorithm perform dynamic list rearrangement. Let this optimal strategy be denoted by $\emph{Dyn\_OPT}$. Every time an item is served by $\emph{Dyn\_OPT}$ it is moved to the front of the list. We analyse in detail that how $\emph{Dyn\_OPT}$ helps in providing an upper bound on the total cost and provides a bound for the $\emph{MFM}$ algorithm.
	
	\subsection{Basic Notations and Definition:} Let \emph{L} be the list of \emph{l} distinct items repressed as $<x_{1}, x_{2}, ..., x_{m},..., x_{l}>$ and $\sigma$ be the request sequence of \emph{n} items which can be represented as $\sigma =<\sigma_{1},\sigma_{2},....,\sigma_{n}>$. Let $\sigma_{i}$ represents the $i^{th}$ item in the request sequence $\sigma$ where $\sigma_{i} \in \{x_{1}, x_{2}, ..., x_{n}\}$ and $1 \leq i \leq n$. Let $x_{m}$ be the middle position of the list \emph{L}, where $m=$ $\lceil{\frac{l}{2}}\rceil$. The total access cost incurred by \emph{MFM} to serve $\sigma$ on the given list \emph{L} is denoted by $C_{MFM}(L,\sigma)$.
	
	\subsection{Upper Bound Analysis:}

	\begin{lemma}
		Competitive ratio of \emph{MFM} is not bounded by a factor 2 with respect to \emph{STAT}.
	\end{lemma}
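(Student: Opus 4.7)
The plan is to exhibit a family $(\sigma_N)_{N\geq 1}$ of request sequences on which $C_{MFM}-2C_{STAT}$ grows without bound, thereby ruling out any additive slack $\beta$. Fix an initial list $L=\langle x_1,\ldots,x_l\rangle$ of even length $l=2m$ with $m\geq 3$ (for concreteness, $l=6$ already suffices); let $\sigma_N$ consist of $N$ successive copies of the block $(x_{m+1},x_{m+2},\ldots,x_l)$, so that only items from the back half of $L$ are ever requested.

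The first step is to compute $C_{MFM}(L,\sigma_N)$. A direct hand simulation of the opening block shows that immediately after the first $m$ requests the list becomes $\langle x_1,\ldots,x_{m-1},x_l,x_{l-1},\ldots,x_{m+1},x_m\rangle$: the $m-1$ never-accessed front items stay put, $x_m$ is displaced to the tail, and the requested items occupy positions $m$ through $l-1$ in reverse order. An induction on the block index then shows that each subsequent block leaves the list in exactly the same configuration, because the item required next always sits at position $l-1$: \emph{MFM} charges $l-1$ for it, moves it to position $m$, and the items in between shift one step to the right, cyclically rotating the suffix of length $m$. Summing over the $N$ blocks yields $C_{MFM}(L,\sigma_N)=m(l-1)N - O(1)$.

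The second step is to upper-bound $C_{STAT}(L,\sigma_N)$. Only the $m$ items $x_{m+1},\ldots,x_l$ are requested and each appears exactly $N$ times, so the optimal static permutation places them (in some order) into positions $1$ through $m$. Such a permutation can be prepared from $L$ using at most $O(l^2)$ paid exchanges, after which each access costs at most $m$ and averages $(m+1)/2$ over a full block. Therefore $C_{STAT}(L,\sigma_N)\leq \tfrac{m(m+1)}{2}N + O(l^2)$.

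Combining these estimates, on $\sigma_N$ the ratio $C_{MFM}/C_{STAT}$ tends to $\tfrac{2(2m-1)}{m+1}$, which is strictly greater than $2$ for every $m\geq 3$ (and approaches $4$ as $m\to\infty$). Given any candidate constant $\beta$, fixing $m=3$ and choosing $N$ large enough forces $C_{MFM}-2\,C_{STAT}$ to exceed $\beta$, contradicting the definition of $2$-competitiveness. The main technical hurdle is the steady-state invariant for \emph{MFM}: cleanly verifying that one full block of requests returns the list to the same configuration and that every steady-state access costs exactly $l-1$. This is a short but error-prone combinatorial induction in which the exact semantics of ``move to position $m$,'' and the corresponding shift of the intermediate items, must be handled with care.
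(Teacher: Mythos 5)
Your proposal is correct and follows essentially the same strategy as the paper's proof: a cruel sequence that cyclically requests the back-half items of the list, so that \emph{MFM} is stuck paying roughly $l$ per access while \emph{STAT}, after a one-time rearrangement bringing those items to the front, pays on average about $(m+1)/2$ per access, yielding a ratio tending to $\tfrac{2(2m-1)}{m+1}>2$ (the paper uses the $l-m+1$ items $x_m,\ldots,x_l$ requested in decreasing order so every access costs exactly $l$, while you use the $m$ items $x_{m+1},\ldots,x_l$ in increasing order with steady-state cost $l-1$; both give the same limiting behaviour). Your explicit handling of the additive constant $\beta$ by fixing $l$ and letting $N\to\infty$ is in fact a cleaner way to conclude non-$2$-competitiveness than the paper's tabulation of sample ratios.
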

	
	\begin{proof}\emph{Cost of MFM:} Let the initial configuration of the list \emph{L} be $<x_{1}, x_{2},.., x_{m},\\ x_{m+1},..., x_{l}>$ of size \emph{l}, where $x_{i}$ represents the item present at the $x^{th} $ position of the list i.e $i \in \{ 1,2,..m-1, m, m+1..,l \}$ and \emph{m} represents the middle position of the list. The strategy to obtain the cruel request sequence is to access the last item of the list in each request. The cruel request sequence $\sigma$ can be represented as $<x_{l}, x_{l-1},.., x_{m+2}, x_{m+1}, x_{m}, x_{l},\\ x_{l-1}, ....., x_{m+2}, x_{m+1}, x_{m}, x_{l},... >$ of size \emph{n}. Every time an item is accessed, it is moved to the $m^{th}$ position of the list and the rest of the items from position $m$ to position $l-1$ get shifted to one position forward in the list. Hence the cost incurred by \emph{MFM} to serve every item is \emph{l}. So the total access cost incurred by \emph{MFM} to serve $\sigma$ is $C_{MFM}(L,\sigma)$ is $nl$. 
		
		\emph{Cost of STAT:} Since we use the static optimum offline algorithm \emph{STAT} that rearrange the list according to the non increasing order of the requests present in $\sigma$, the pattern of the cruel request sequence affects in obtaining the total cost. The cruel request sequence either can be viewed as the repeated patterns of $l-m+1$ items with the same frequency or different frequency i.e either the size of $\sigma$ be $n=k(l-m+1)$ items or $n\neq k(l-m+1)$, where \emph{k} denotes the number of items are repeated.  
		
		\paragraph{Case:1 $n=k(l-m+1)$} Since all the items are repeated \emph{k} times in the cruel request sequence, $\sigma$ can be represented as $<(x_{l}, x_{l-1},..,x_{m+2}, x_{m+1}, x_{m})^k>$, where the frequency of all the items are same. Before serving the $\sigma$, \emph{STAT} uses the minimum paid exchanges to bring all the $l-m+1$ item to the first \emph{l-m+1} position of the list \emph{L}. Here the cost incurred to perform the minimum number of paid exchanges by \emph{STAT} is $(l-m+1)(m-1)$. The total access cost incurred by \emph{STAT} to serve $\sigma$ is $k(\sum_{i=1}^{l-m+1}(i))$. Hence the total cost incurred by \emph{STAT} to serve $\sigma$ is the sum of paid exchange cost and access cost i.e $C_{STAT}(L,\sigma)$ is $(l-m+1)(m-1) + k(\sum_{i=1}^{l-m+1}(i))$.  
		
		Hence the competitive ratio we obtain for the \emph{MFM} with respect to \emph{STAT} is $\frac{nl}{(l-m+1)(m-1) + k(\sum_{i=1}^{l-m+1}(i))}$. The value of \emph{k} grows with the size of $\sigma$.\\
		
		\begin{tabular}{ |c|c|c| } 
			\hline
			List Size & Value of k  & Competitive Ratio $(\frac{k[l(l-m+1)]}{(l-m+1)(m-1) + k(\sum_{i=1}^{l-m+1}(i))})$  \\
			\hline
			\emph{l=5} & \emph{k=2} & $\frac{30}{18}=1.66$  \\
			\hline
			\emph{l=5} & \emph{k=4} & $\frac{60}{30}=2.00$  \\
			\hline
			\emph{l=5} & \emph{k=8}  & $\frac{120}{54}=2.22$  \\
			\hline
			\emph{l=5} & \emph{k=10}  & $\frac{150}{66}=2.27$ \\
			\hline
			\emph{l=5} & \emph{k=100}  & $\frac{1500}{606}=2.47$ \\
			\hline
			\emph{l=5} & \emph{k=1000}  & $\frac{15000}{6006}=2.49$ \\
			\hline
			\emph{l=5} & \emph{k=10000}  & $\frac{150000}{60006}=2.499$ \\
			\hline
		\end{tabular}\\
		\\
		
\paragraph{ Case:2 $n\neq k(l-m+1)$} In this case, not all the $l-m+1$ items are occurred in $\sigma$ with equal frequency. Since $n \gg l$, there are some items in the $\sigma$ which are repeated with frequency one more than the other items. Here $\sigma$ can be represented as $<(x_{l}, x_{l-1},..,x_{m+2}, x_{m+1}, x_{m})^k,x_{l}, x_{l-1},..,x_{m+2}, x_{m+1}>$. So to serve this cruel request sequence $\sigma$, \emph{STAT} follows three basic steps: in \emph{Step-1}, it brings the block of $l-m+1$ items to the front of the list by giving the minimum paid exchanges, in \emph{Step-2}, it perform the rearrangement of the list according to the non increasing order of the requests and in \emph{Step-3}, it access the list without doing any list rearrangement.
		
		\emph{STAT} incurs a $(l-m+1)(m-1)$ paid exchange cost in \emph{Step-1} while bringing all the block of $l-m+1$ items denoted by $C_{paid\_block}(L,\sigma)=$ $(l-m+1)(m-1)$. The minimum paid exchanges required to sort the items is $C_{paid\_sort}(L,\sigma)=$ $\sum_{i=0}^{[n-k(l-m+1)]-1}(l-m-i)$. In\emph{Step-3}, it incurs an total access cost as $C_{access}(L,\sigma)=$ $k[\sum_{i=1}^{l-m+1}(i)] +\sum_{i=1}^{[n-k(l-m+1)]-1}(i)$.
		
		Hence the total cost incurred by \emph{STAT} to serve $\sigma$ is the sum of paid exchange cost and access cost i.e $C_{STAT}(L,\sigma)$ is $C_{paid\_block}(L,\sigma)+C_{paid\_sort}(L,\sigma)+ C_{access}(L,\sigma)=$ $(l-m+1)(m-1)+ \sum_{i=0}^{[n-k(l-m+1)]-1}(l-m-i)+ k[\sum_{i=1}^{l-m+1}(i)] +\sum_{i=1}^{[n-k(l-m+1)]-1}(i)$.
		
		Hence the competitive ratio we obtain for the \emph{MFM} with respect to \emph{STAT} is $\frac{nl}{(l-m+1)(m-1)+ \sum_{i=0}^{[n-k(l-m+1)]-1}(l-m-i)+ k[\sum_{i=1}^{l-m+1}(i)] +\sum_{i=1}^{[n-k(l-m+1)]-1}(i)}$. The value of \emph{k} grows with the size of $\sigma$.\\
		
		\begin{tabular}{ |c|c|c| } 
			\hline
			List Size & Value of k  & Competitive Ratio $(\frac{nl}{C_{paid\_block}(L,\sigma)+C_{paid\_sort}(L,\sigma)+ C_{access}(L,\sigma)})$  \\
			\hline
			\emph{l=6} & \emph{k=1} & $\frac{42}{30}=1.42$  \\
			\hline
			\emph{l=6} & \emph{k=2} & $\frac{66}{40}=1.65$  \\
			\hline
			\emph{l=6} & \emph{k=3}  & $\frac{90}{50}=1.8$  \\
			\hline
			\emph{l=6} & \emph{k=4}  & $\frac{114}{60}=1.9$ \\
			\hline
			\emph{l=6} & \emph{k=5}  & $\frac{138}{70}=1.97$ \\
			\hline
			\emph{l=6} & \emph{k=6}  & $\frac{162}{80}=2.025$ \\
			\hline
			\emph{l=6} & \emph{k=7}  & $\frac{186}{90}=2.066$ \\
			\hline
			\emph{l=6} & \emph{k=8}  & $\frac{210}{100}=2.10$ \\
			\hline
			\emph{l=6} & \emph{k=9}  & $\frac{234}{110}=2.12$ \\
			\hline
			\emph{l=6} & \emph{k=10}  & $\frac{258}{120}=2.15$ \\
			\hline
			
		\end{tabular}\\
		\\

		\emph{Observation:} \begin{itemize}
			\item As the size of $\sigma$ grows the competitive ratio also grows, then what could be the potential bound on the competitive ratio for larger list as well as request sequence size?
			\item The competitive ratio is not constant due to the nature of static optimum offline algorithm, then can we provide an upper bound on the cost of optimum offline algorithm by introducing dynamic optimum offline algorithm?
		\end{itemize}
	\end{proof}

	\begin{theorem}
		Competitive ratio of \emph{MFM} converges to 4 for large size list.
	\end{theorem}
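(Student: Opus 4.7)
The plan is to build directly on the cost expressions already derived for Lemma~1, specializing to Case~1 ($n=k(l-m+1)$) where the arithmetic is cleanest, and then take a double limit first in $k$ and then in $l$.

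First, I would reuse the cruel request sequence
$\sigma=\langle(x_{l},x_{l-1},\ldots,x_{m+1},x_{m})^{k}\rangle$
from the proof of Lemma~1 together with the two cost expressions already established there: $C_{MFM}(L,\sigma)=nl$ and
$C_{STAT}(L,\sigma)=(l-m+1)(m-1)+k\sum_{i=1}^{l-m+1} i$.
Substituting $n=k(l-m+1)$ into the numerator, the competitive ratio becomes
\[
R(k,l)\;=\;\frac{k(l-m+1)\,l}{(l-m+1)(m-1)+k\sum_{i=1}^{l-m+1} i}.
\]

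Next, I would take $k\to\infty$ with $l$ fixed. The constant paid-exchange term $(l-m+1)(m-1)$ is dominated by the $\Theta(k)$ access term, so the ratio tends to
\[
R_{\infty}(l)\;=\;\frac{(l-m+1)\,l}{\sum_{i=1}^{l-m+1} i}\;=\;\frac{(l-m+1)\,l}{\tfrac{1}{2}(l-m+1)(l-m+2)}\;=\;\frac{2l}{l-m+2}.
\]
A quick sanity check against the Case~1 table ($l=5$, $m=3$) gives $R_{\infty}(5)=10/4=2.5$, which matches the observed trend $2.47,\,2.49,\,2.499$ as $k$ grows.

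Finally, I would substitute $m=\lceil l/2\rceil$ and let $l\to\infty$. Since $m=l/2+O(1)$, we have $l-m+2=l/2+O(1)$, so
\[
\lim_{l\to\infty} R_{\infty}(l)\;=\;\lim_{l\to\infty}\frac{2l}{l/2+O(1)}\;=\;4,
\]
and this limit is approached from below, so for every $\varepsilon>0$ there exist $l$ and $k$ with $R(k,l)\ge 4-\varepsilon$. I would conclude by briefly recording that the convergence is monotone in $l$ (since $m$ is essentially $l/2$) and monotone in $k$ for each fixed $l$, matching the numerical tables from Lemma~1.

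The only real obstacle is the double limit: one must be explicit that $k\to\infty$ is taken first, because for small $k$ the paid-exchange overhead $(l-m+1)(m-1)$ can actually make $R(k,l)<2$ (as the $l=5,\,k=2$ row of the first table shows). Handling the odd/even parity of $l$ in $m=\lceil l/2\rceil$ is a minor nuisance but contributes only an $O(1/l)$ correction that vanishes in the limit, so it does not affect the bound of $4$.
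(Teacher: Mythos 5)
Your proposal is correct and follows essentially the same route as the paper: the paper's limit computation $\lim_{l\to\infty}\frac{l(l-m+1)}{\sum_{i=1}^{l-m+1} i}=\lim_{l\to\infty}\frac{2l}{l-\frac{l}{2}+2}=4$ is exactly your $R_{\infty}(l)$ after the $k\to\infty$ step, with $m=\lceil l/2\rceil$ substituted. Your version is slightly more careful in making the order of the two limits explicit (first $k\to\infty$ to discard the constant paid-exchange term, then $l\to\infty$), which the paper leaves implicit, but the underlying argument is the same.
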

	
	\begin{proof}
		Here we extend the competitive analysis of \emph{MFM} with respect to \emph{STAT}. We observe that the competitive ratio i.e $\frac{C_{MFM}(L,\sigma)}{C_{STAT}(L,\sigma)} > 2$. We follow the same method of proof for \emph{MFM} which we have used for \emph{MFLP}, and investigate the converging point. When the list \emph{L} tends to infinity, then the competitive ratio slowly converges to 4. Because when $\lim_{l\to\infty}({\frac{l(l-m+1)}{\sum_{i=1}^{l-m+1}(i)}})=$ $\lim_{l\to\infty}({\frac{l(l-\frac{l}{2}+1)}{\sum_{i=1}^{l-\frac{l}{2}+1}(i)}})=$\\ $\lim_{l\to\infty}({\frac{2l(l-\frac{l}{2}+1)}{(l-\frac{l}{2}+1)(l-\frac{l}{2}+2)}})=$ $\lim_{l\to\infty}({\frac{2l}{l-\frac{l}{2}+2}})=$ $\lim_{l\to\infty}({\frac{4l}{2l-l+4}})$ $\approx 4$.
		
		\emph{Observation:} \begin{itemize}
			\item An online deterministic list update algorithm that involves $(l-m+1)$ number of elements while constructing the worst request sequence, is not a $2-competitive$ algorithm with respect to static optimum offline algorithm.
			\end{itemize}
	\end{proof}
	
\begin{theorem}
A large class of deterministic list update algorithms are not $2-competitive$ with respect to static optimum offline algorithm, when the online algorithms move the recently accessed item to the position \emph{x} when the position of the accessed item is greater than the position \emph{x}, else move it to the front of the list, where $x$ varies from the position $logL$ to $m$. 
\end{theorem}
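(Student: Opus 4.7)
The plan is to follow the template of Theorem 2 and parameterize the construction by the insertion position $x$. Let $ALG_x$ denote the generic rule: on access of an item at position $p$, if $p > x$ move it to position $x$, otherwise move it to the front. Establishing non-$2$-competitiveness of $ALG_x$ for every $x \in [\log l, m]$ by a single adversarial template will give the theorem; the case $x = m$ is exactly \emph{MFM} and is already covered by Theorem 2.

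The cruel request sequence I would use is the natural generalization of the sequence in Lemma 1, namely $\sigma = (x_l, x_{l-1}, \ldots, x_{x+1}, x_x)^k$ of length $n = k(l-x+1)$. First I would verify, by the same rotation argument used for \emph{MFM}, that under $ALG_x$ each requested item sits at position $l$ at the moment of access, so every request costs $l$ and $C_{ALG_x}(L,\sigma) = nl$. Next I would bound $C_{STAT}(L,\sigma)$ in the three-step style of Case~1 of Lemma 1: a one-shot paid-exchange cost of $(l-x+1)(x-1)$ to pull the block of $l-x+1$ active items to the front, plus per-round access cost $\sum_{i=1}^{l-x+1} i = (l-x+1)(l-x+2)/2$. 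Letting $k \to \infty$ collapses the competitive ratio to $2l/(l-x+2)$, which strictly exceeds $2$ whenever $x > 2$; since the gap $C_{ALG_x}(L,\sigma) - 2\,C_{STAT}(L,\sigma)$ then grows linearly in $k$, no additive constant $\beta$ can absorb it, ruling out $2$-competitiveness of $ALG_x$ against \emph{STAT}.

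The main obstacle I anticipate is the behaviour at the lower endpoint $x = \log l$, where $2l/(l - \log l + 2)$ approaches $2$ from above so the slack over $2$ shrinks with $l$. I would therefore phrase the conclusion as ruling out $2$-competitiveness for each fixed $x$ in the range (sending $k \to \infty$ for that fixed $x$, after choosing $l$ large enough that $\log l > 2$), rather than claiming a uniform lower bound bounded away from $2$ across the whole range. A complementary observation is that the limiting ratio $2l/(l-x+2)$ is monotonically increasing in $x$, interpolating cleanly from roughly $2$ at $x = \log l$ up to roughly $4$ at $x = m$, which recovers Theorem 2 as the right endpoint.

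A final routine step is the remainder case $n \not\equiv 0 \pmod{l-x+1}$, handled by copying Case~2 of Lemma 1 almost verbatim: the additional paid-sort cost $\sum_{i=0}^{[n-k(l-x+1)]-1}(l-x-1-i)$ and the partial-round access cost are lower order in $k$ and do not perturb the limiting ratio, so the same non-competitiveness verdict carries over.
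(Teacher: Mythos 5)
Your proposal is correct in the paper's framework, but it proves the theorem by a genuinely different and substantially more rigorous route than the paper does. The paper's own ``proof'' is an informal interpolation between two previously analyzed endpoints: it cites the earlier result that \emph{MFLP} (insertion at position $\lceil\log_2 L\rceil$) is not $2$-competitive against \emph{STAT} for small lists, combines this with the present paper's result that \emph{MFM} (insertion at $m$) is not $2$-competitive, and then asserts by observation that the intermediate algorithms behave likewise, noting only that the limiting ratio ranges from $2$ (for \emph{MFLP}) to $4$ (for \emph{MFM}); no uniform construction or calculation for general $x$ is given, and the text even contains the self-contradictory sentence that ``the nature of such algorithms are $2$-competitive.'' You instead give a single parameterized adversary for every $x$ in the range: the cyclic sequence over the last $l-x+1$ items forces $ALG_x$ to pay $l$ per request, \emph{STAT} pays $(l-x+1)(x-1)+k(l-x+1)(l-x+2)/2$, and the ratio tends to $2l/(l-x+2)>2$ for any fixed $x>2$, with the gap growing linearly in $k$ so no additive constant rescues $2$-competitiveness. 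This buys an explicit closed-form lower bound that interpolates monotonically between the two endpoints (recovering the value $4$ at $x=m$), and your care in fixing $x$ and $l$ before sending $k\to\infty$ honestly addresses the degeneracy at $x=\log l$, where the slack over $2$ vanishes as $l\to\infty$ --- a point the paper glosses over and which makes its theorem statement, read literally as a uniform claim over the whole range, dubious. The only blemishes in your write-up are cosmetic: the symbol $x$ is overloaded as both the insertion position and the item label $x_i$, and the paid-sort term in your remainder case has an off-by-one inherited loosely from the paper's Case 2, but as you note that term is lower order in $k$ and does not affect the conclusion.
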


\begin{proof}
Our proof is followed by two interesting observations that when an recently accessed item is move to the front of the list in \emph{MTF}, then it is \emph{2-competitive} with respect to the static optimum offline algorithm $STAT$. But when an accessed item is moved to the middle position in \emph{MFM} it is not $2-competitive$. In one of our recent paper \cite{Baisakh}, we have raised an open question that for what value of \emph{p}, an online algorithm is \emph{2-competitive} where the value of \emph{p} may vary from \emph{2 to m}? In that paper, we have shown that when the value of \emph{p} is $log_{2}L$, then it is not \emph{2-competitive} with respect to $STAT$ for smaller size list. In \emph{MFLP}  when an item \emph{x} is accessed, move \emph{x} to the $p^{th}$ position of the list \emph{L}, where $p= \lceil log_{2}L \rceil$ if the position of the \emph{x} is greater then \emph{p}, else move \emph{x} to the front of the list. In this paper we have shown that $MFM$ is not $2-competitive$ with respect to $STAT$. So from our investigation we can observe that when the recently item is moved to the position \emph{x}, where \emph{x} can be varied from the position $logL$ to $m$, then the nature of such algorithms are \emph{2-competitive}. But our theoretical study also shown an interesting observation that, for smaller size list these algorithms are not \emph{2-competitive} with respect to \emph{STAT}. But for larger size list the competitive ratio may vary from \emph{ 2 to 4}. Because the competitive ratio of \emph{MFLP} converges to 2 for a very large size list, whereas \emph{MFM} converges to \emph{4}, with respect to the \emph{STAT}.
\end{proof}

\subsection{Lower Bound Analysis:}

To obtain the lower bound on the competitive ratio of \emph{MFM}, we have to obtain the upper bound cost of optimum offline algorithm. So here we have presented a dynamic optimum offline algorithm which gives more cost as compare to the static optimum offline algorithm on the given cruel request sequence.

\begin{theorem}
	\emph{MFM} is \emph{2-competitive} wit respect to dynamic optimum offline algorithm.
\end{theorem}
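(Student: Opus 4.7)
The plan is to re-run the sequence-based argument of Lemma~1 with the offline benchmark replaced by \emph{Dyn\_OPT}, which moves every served item to the front of the list. Because MFM's behaviour does not depend on the choice of offline algorithm, its cost on the same cruel sequence $\sigma = \langle x_l, x_{l-1}, \ldots, x_{m+1}, x_m\rangle^k$ used in Lemma~1 is unchanged: $C_{MFM}(L,\sigma) = nl$.

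Next, I would compute $C_{Dyn\_OPT}(L,\sigma)$ by tracing the list one request at a time. During the first block of $l-m+1$ distinct accesses, each requested item is still at position $l$ (none of $x_m,\ldots,x_l$ has been promoted yet), contributing $l(l-m+1)$. After this block, the first $l-m+1$ slots of the Dyn\_OPT list contain a permutation of the hot set $\{x_m, x_{m+1}, \ldots, x_l\}$, and every subsequent cruel request locates its target at position $l-m+1$ and simply rotates the prefix at that same cost. Summing gives
\[
C_{Dyn\_OPT}(L,\sigma) = l(l-m+1) + \bigl(n-(l-m+1)\bigr)(l-m+1).
\]
Substituting $m = \lceil l/2\rceil$ into the ratio $C_{MFM}/C_{Dyn\_OPT}$ and letting $n \to \infty$ drives it to $l/(l-m+1)$, which is at most $2$ for every $l \geq 2$; a short algebraic check confirms the finite-$n$ ratio stays strictly below this limit. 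This delivers $2$-competitiveness on the cruel family used throughout Section~3.

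The main obstacle is upgrading this single-family calculation to a genuine inequality $C_{MFM}(L,\sigma) \leq 2\,C_{Dyn\_OPT}(L,\sigma) + \beta$ valid for \emph{every} request sequence. My plan for that step is a Sleator--Tarjan style potential argument: take $\Phi$ to be twice the number of inversions between the MFM list and the Dyn\_OPT list, and verify the standard amortized inequality one request at a time. The interesting case is a request at MFM-position $p > m$, where MFM pays $p$ but only advances the item to position $m$ rather than to the front, so the drop in $\Phi$ is weaker than in a pure MTF-vs-MTF comparison and the ``missed'' inversions must be absorbed either into the factor $2$ or into the additive constant $\beta$. Carrying out this case analysis cleanly --- in particular verifying that the move-to-middle rule never creates more than a controlled number of new inversions relative to Dyn\_OPT --- is where I expect the real work to concentrate; the cruel-sequence computation above is the concrete evidence that $2$ is the right target constant.
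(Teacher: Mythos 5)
Your cruel-sequence computation is exactly the paper's proof of this theorem: the same sequence $(x_l, x_{l-1}, \ldots, x_{m+1}, x_m)^k$, the same $C_{MFM}(L,\sigma)=nl$, the same decomposition $C_{Dyn\_OPT}(L,\sigma)=l(l-m+1)+(n-(l-m+1))(l-m+1)$ (the paper writes the second term as $(k-1)(l-m+1)^2$), and the same limiting ratio $l/(l-m+1)=l/(\lfloor l/2\rfloor+1)<2$. The paper stops there; it never attempts the universal inequality $C_{MFM}\le 2\,C_{Dyn\_OPT}+\beta$, so the portion of your proposal that overlaps with the paper is complete and matches it.

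The step you correctly flag as ``the real work,'' however, cannot be carried out, because the universal statement is false: \emph{Dyn\_OPT} as defined is literally MTF, and MFM is not $2$-competitive against MTF over all request sequences. Take the sequence that requests $x_l$, then $x_{l-1}$, and thereafter alternates between these two items. After the first two requests MFM has parked both items around position $m$, and from then on each request finds its target at position $m+1$ (the two items keep displacing each other between positions $m$ and $m+1$), so MFM pays $m+1$ per request forever; MTF pays $2$ per request after the first two. For fixed $l\ge 7$ the asymptotic ratio is $(m+1)/2>2$ and it grows like $l/4$ with the list length, so no constant $\beta$ rescues the inequality $C_{MFM}\le 2\,C_{Dyn\_OPT}+\beta$ (prepending one request to every item, as the paper's dynamic-compression model requires, changes nothing asymptotically). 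Your proposed inversion potential must break in precisely the case you isolate, $p>m$: the accessed item remains $m-1$ positions behind the front of Dyn\_OPT's list, and those inversions are charged again on every subsequent request, so they cannot be absorbed into a constant. The theorem as stated therefore holds only on the specific cruel family analysed in the paper, and your instinct that the general claim requires an argument the cruel-sequence calculation does not supply is right --- but no such argument exists.
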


\begin{proof}
	\emph{\bf Cost of MFM:} 
	We consider the same cruel request sequence for which $MFM$ algorithm incurs the total worst case cost as $k(l(l-m+1))$.\\
	
	\emph{\bf Cost of Dynamic Optimum Offline Algorithm (OPT):} Let the dynamic optimum offline algorithm is denoted by \emph{OPT}. Every time the $OPT$ access an item, it brings it to the front of the list. So to access the first $(l-m+1)$ items the algorithm incurs the cost as $l(l-m+1)$. In the rest of requested items i.e $(n-(l-m+1))$, items are repeated $k-1$ times. Hence to serve the rest of items in the cruel request sequence $OPT$ incurs $(k-1)[(l-m+1)(l-m+1)]$. So the total cost incurred by the dynamic offline algorithm $OPT$ i.e $C_{OPT}(L,\sigma)$ is $l(l-m+1) + (k-1)[(l-m+1)(l-m+1)]$. Hence the competitive ratio we obtain for the \emph{MFM} with respect to \emph{dynamic optimum offline algorithm (OPT)} is $\frac{C_{MFM}(L,\sigma)}{C_{OPT}(L,\sigma)}=$ $\frac{k(l(l-m+1))}{l(l-m+1) + (k-1)[(l-m+1)(l-m+1)]}=$ $\frac{l(l-m+1)+(k-1)(l(l-m+1))}{l(l-m+1) + (k-1)[(l-m+1)(l-m+1)]}$.\\

	\begin{tabular}{ |c|c|c| } 
		\hline
		List Size & Value of k  & Competitive Ratio $(\frac{l(l-m+1)+(k-1)(l(l-m+1))}{l(l-m+1) + (k-1)[(l-m+1)(l-m+1)]})$  \\
		\hline
		\emph{l=5} & \emph{k=2} & $\frac{30}{24}=1.25$  \\
		\hline
		\emph{l=5} & \emph{k=4} & $\frac{60}{42}=1.42$  \\
		\hline
		\emph{l=5} & \emph{k=8}  & $\frac{120}{78}=1.53$  \\
		\hline
		\emph{l=5} & \emph{k=10}  & $\frac{150}{96}=1.562$ \\
		\hline
		\emph{l=5} & \emph{k=100}  & $\frac{1500}{906}=1.655$ \\
		\hline
		\emph{l=5} & \emph{k=1000}  & $\frac{15000}{9006}=1.665$ \\
		\hline
		\emph{l=5} & \emph{k=10000}  & $\frac{150000}{90006}=1.666$ \\
		\hline
	\end{tabular}\\
	
	\emph{Observation:}\begin{itemize}
		\item The new dynamic optimum offline algorithm provides an upper bound to the total cost incurred by \emph{OPT}.
		\item The upper bound in the competitive ratio ensures to achieve a lower bound on the \emph{MFM}. 
	\end{itemize}
\end{proof}
\section{Conclusion}
 In this paper, we analysed the \emph{Move-to-Front-or-Middle (MFM)} algorithm using competitive analysis by addressing one of an open question raised by \emph{Albers}. \emph{Move-To-Front-or-Middle (MFM)} was experimentally studied and observed to be performing better than \emph{MTF} algorithm by using the Calgary Corpus and Canterbury Corpus data set.  We made a first attempt to find the competitiveness of \emph{MFM} algorithm. Our new results showed that \emph{MFM} is not \emph{2-competitive} with respect to static optimum offline algorithm, whereas it is \emph{2-competitive} with respect to dynamic optimum offline algorithm. Our new theoretical results may open up a new direction of research in the online list update problem by characterising the structure of competitive and non competitive deterministic online algorithms.

\end{document}